\newif\ifsubmit\newif\ifphone %
\begin{document}

\title{Approximating optimal~transport with~linear~programs}

\author{Kent Quanrud\thanks{University of Illinois, Urbana-Champaign;
    \href{mailto:quanrud2@illinois.edu}{\nolinkurl{quanrud2@illinois.edu};}
    \url{http://illinois.edu/\~quanrud2}. Supported in part by NSF
    grant CCF-1526799.}}

\maketitle

\newcommand{\defterm}[1]{{\boldmath\normalfont \bfseries #1}}

\newcommand{\distributions}{\Delta}
\newcommand{\diagonal}{\operatorname{diag}\optpar}

\begin{abstract}
  In the regime of bounded transportation costs, additive
  approximations for the optimal transport problem are reduced (rather
  simply) to relative approximations for positive linear programs,
  resulting in faster additive approximation algorithms for optimal
  transport.
\end{abstract}

\section{Introduction}

For $\ell \in \naturalnumbers$, let
$\distributions^\ell = \setof{p \in \nnreals^\ell \where \norm{p}_1 =
  1}$ denote the convex set of probability distributions over
$[\ell]$.  In the \defterm{(discrete) optimal transport problem}, one
is given two distributions $p \in \distributions^\ell$ and
$q \in \distributions^k$ and a nonnegative matrix of
\defterm{transportation costs} $C \in \nnreals^{k \times \ell}$. The
goal is to
\begin{align*}
  \text{minimize }  \sum_{i=1}^k\sum_{j=1}^{\ell} C_{ij} X_{ij} p_j %
  \text{ over } X \in \nnreals^{k \times \ell}                                                        %
  \text{ s.t.\ } X p = q \text{ and } X^t \ones = \ones.    %
  \labelthisequation{T}{transport}
\end{align*}
We let \refequation{transport} denote both the above optimization
problem and its optimal value. \refequation{transport} can be
interpreted as the minimum cost of ``transporting'' a discrete
distribution $p$ to a target distribution $q$, where the cost of
moving probability mass from one coordinate to another is given by
$C$. \refequation{transport} is sometimes called the \emph{earth mover
  distance} between $p$ and $q$, where one imagines $p$ and $q$ as
each dividing the same amount of sand into various piles, and the goal
is to rearrange the piles of sand of $p$ into the piles of sand of $q$
with minimum total effort.

Optimal transport (in much greater generality) is fundamental to
applied mathematics \citep{villani-03,villani-09}. Computing (or
approximating) the optimal transport matrix and its cost has many
applications: we refer to recent work by \citet{cuturi}, \citet{awr}
and \citet{dgk} for further (and up-to-date) references.

Optimal transport is a linear program (abbr.\ LP) and can be solved
exactly by linear program solvers. \refequation{transport} can also be
cast as a minimum cost flow problem, thereby solved combinatorially.
The fastest exact algorithm runs in $\apxO{k \ell \sqrt{k+\ell}}$ time
via minimum cost flow \citep{ls}. Here and throughout $\apxO{\cdot}$
hides polylogarithmic terms in $k,\ell$.

There is recent interest, sparked by \citet{cuturi}, in obtaining
\emph{additive approximations} to \refequation{transport} with running
times that are \emph{nearly linear} in the size of the cost matrix
$C$. For $\delta > 0$, a matrix $X$ is a \defterm{$\delta$-additive
  approximation} if it is a feasible solution to
\refequation{transport} with cost at most a $\delta$ additive factor
more than the optimal transport cost \refequation{transport}. A
``nearly linear'' running time is one whose dependence on $k$ and
$\ell$ is of the form $\bigO{k\ell \polylog{k,\ell}}$; i.e., linear in
the input size up to polylogarithmic factors. \citeauthor{cuturi}
highlights applications in machine learning with large,
high-dimensional datasets, for which a faster approximation algorithm
may be preferable to a slower exact algorithm.

The first nearly linear time additive approximation was obtained
recently by \citet{awr}.  Their result combines a reduction to matrix
scaling observed by \citeauthor{cuturi} and an improved analysis for a
classical matrix scaling algorithm due to \citet{sk} as applied to
this setting (see also \citep{ck}). The bound has a cubic dependency
on $\infnorm{C} / \delta$, where $\infnorm{C} = \max_{i,j} C_{ij}$ is
the maximum value of any coordinate in $C$ and is considered a lower
order term. One factor of $1/\delta$ can be removed by recent advances
in matrix scaling \citep{cmtv} (per \citet{awr}).  A tighter analysis
by \citet{dgk} of the Sinkhorn-Knopp approach decreases the dependency
on $\infnorm{C} / \delta$ to the following.
\begin{theorem}[\citealp{dgk}]
  A $\delta$-additive approximation to \refequation{transport} can be
  computed in $\apxO{k\ell \prac{\infnorm{C}}{\delta}^2}$ time.
\end{theorem}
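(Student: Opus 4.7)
My plan is to attack this via entropic regularization followed by Sinkhorn--Knopp matrix scaling. Introducing the change of variables $Y_{ij} = X_{ij} p_j$, \refequation{transport} becomes the transportation LP of minimizing $\sum_{ij} C_{ij} Y_{ij}$ over $Y \in \nnreals^{k \times \ell}$ subject to the marginal constraints $Y \ones = q$ and $Y^t \ones = p$. I would add the entropic regularizer $\eta^{-1} \sum_{ij} Y_{ij} \log Y_{ij}$ with $\eta = \Theta(\log(k\ell)/\delta)$; in the bounded-cost regime this perturbs the optimum by at most $\delta/2$, and the regularized optimum takes the Gibbs form $Y^*_{ij} = a_i b_j \exp(-\eta C_{ij})$ for scaling vectors $a, b$ determined by the marginals.

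Next I would approximate $a, b$ by Sinkhorn--Knopp iterations: starting from the kernel $K_{ij} = \exp(-\eta C_{ij})$, alternately rescale rows and then columns so that the current matrix exactly satisfies one marginal constraint at a time. Each iteration costs $\bigO{k\ell}$ work. The central analytic claim is that $\apxO{\eta \infnorm{C} / \epsilon}$ iterations suffice to drive the $\ell_1$ error between the current marginals and $(p, q)$ below $\epsilon$. I would establish this by using the KL divergence from the iterate to $Y^*$ as a potential, invoking Pinsker's inequality to lower-bound its per-iteration decrease by a constant times the squared $\ell_1$ marginal error, with the initial potential controlled by $\bigO{\eta \infnorm{C} + \log(k\ell)}$.

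Finally, given an iterate whose marginals agree with $(p, q)$ in $\ell_1$ error at most $\epsilon$, a standard rounding step transports the discrepancy along any auxiliary plan to produce a feasible $X$ at additive cost at most $\bigO{\infnorm{C} \epsilon}$. Choosing $\epsilon = \Theta(\delta / \infnorm{C})$ balances rounding error against regularization error, giving $\apxO{(\infnorm{C}/\delta)^2}$ total iterations and $\apxO{k\ell (\infnorm{C}/\delta)^2}$ total runtime. The main obstacle is the convergence analysis: a straightforward bound on Sinkhorn iterates in $\ell_1$ or $\ell_\infty$ gives weaker per-iteration progress and inflates the iteration count by an extra factor of $\eta \infnorm{C}$, reproducing the cubic dependence of earlier work, and it is only the sharper KL--Pinsker potential argument that collapses the bound to the claimed quadratic dependence on $\infnorm{C}/\delta$.
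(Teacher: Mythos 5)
This theorem is cited from prior work (Dvurechensky, Gasnikov, and Kroshnin) and the paper gives no proof of it, so there is nothing internal to compare your argument against; I will judge the proposal on its own terms. Your high-level strategy is indeed the one used in the cited work: pass to the variables $Y_{ij}=X_{ij}p_j$, entropically regularize with $\eta=\Theta(\log(k\ell)/\delta)$, run Sinkhorn on the Gibbs kernel, and round. The parameter choices and the rounding step are fine.

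The gap is in the convergence analysis, and it matters. With the Sinkhorn dual (equivalently, the KL divergence from the current iterate to the regularized optimum $Y^*$) as potential, Pinsker gives a per-iteration drop of at least a constant times the squared $\ell_1$ marginal error $\epsilon_t^2$, and the initial potential is $O(\eta\infnorm{C}+\log(k\ell))$. Combining ``total drop $\leq O(\eta\infnorm{C})$'' with ``each step drops $\geq \Omega(\epsilon^2)$ while the error exceeds $\epsilon$'' gives an iteration bound of $O(\eta\infnorm{C}/\epsilon^2)$, not the $O(\eta\infnorm{C}/\epsilon)$ you assert. Substituting $\eta=\Theta(\log(k\ell)/\delta)$ and $\epsilon=\Theta(\delta/\infnorm{C})$ into $O(\eta\infnorm{C}/\epsilon^2)$ yields $\apxO{(\infnorm{C}/\delta)^3}$ iterations and $\apxO{k\ell(\infnorm{C}/\delta)^3}$ time --- precisely the cubic bound of the earlier Sinkhorn analysis you say you are improving, not the quadratic one claimed. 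In other words, the argument you call the ``sharper KL--Pinsker potential argument'' is the cubic one. To get the quadratic bound you need a second inequality of the form (dual suboptimality gap) $\leq R\cdot(\text{dual gradient } \ell_1\text{-norm})$, where $R=O(\eta\infnorm{C})$ bounds the magnitude of the dual scaling vectors; coupled with the Pinsker drop this gives geometric contraction of the gap while the marginal error is large, and a two-phase counting argument then yields $O(R/\epsilon)$ total iterations. Without that second bound and the phase split, the extra $1/\epsilon$ factor does not disappear.

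As a side note, the paper under review does not take the Sinkhorn route at all: it proves matching bounds (and a better $\apxO{k\ell\,\infnorm{C}/\delta}$ randomized bound) by casting \refequation{transport} as a mixed packing-and-covering or pure packing LP, invoking nearly-linear-time relative approximation solvers, and then repairing the near-feasible output to exact feasibility via the oblivious transport matrix of \reflemma{oblivious-transport}. That reduction sidesteps entropic regularization and the Sinkhorn potential analysis entirely.
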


\subsection{Results}

The optimal transport cost can be approximated more efficiently as
follows. Some of the results are parametrized by the quantity
$\ripover{C}{p}{q}$ instead of $\infnorm{C}$. The quantity
$\ripover{C}{p}{q}$ is the average cost coefficient as sampled from
the product distribution $p \times q$. Needless to say, the average
cost $\ripover{C}{p}{q}$ is at most the maximum cost $\infnorm{C}$,
and the relative difference may be arbitrarily large.
\begin{theorem}
  \labeltheorem{apx-transport-matrix}
  One can compute a $\delta$-additive approximate transportation
  matrix $X$ from $p$ to $q$ sequentially in either
  \begin{enumerate}\raggedright
  \item
    \begin{math}
      \apxO{k\ell \prac{\ripover{C}{q}{p}}{\delta}^2}
    \end{math}
    deterministic time or
  \item
    \begin{math}
      \apxO{k \ell \frac{\infnorm{C}}{\delta}}
    \end{math}
    randomized time;
  \end{enumerate}
  or deterministically in parallel with
  \begin{enumerate}[resume]\raggedright
  \item $\apxO{\prac{\ripover{C}{q}{p}}{\delta}^3}$ depth and
    $\apxO{k\ell \prac{\ripover{C}{p}{q}}{\delta}^2}$ total work, or
  \item
    $\apxO{\prac{\infnorm{C}}{\delta}^2}$ depth and
    $\apxO{k\ell \prac{\infnorm{C}}{\delta}^2}$ total work.
  \end{enumerate}
\end{theorem}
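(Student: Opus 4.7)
The plan is to recast \refequation{transport} as a positive linear program --- specifically, a mixed packing-covering LP with $k\ell$ variables and $\bigO{k + \ell}$ constraints --- and invoke black-box multiplicative-weights solvers whose various running times correspond to the four items of the theorem. Introducing the transport-plan variables $\Pi_{ij} = X_{ij} p_j$ (the mass moved from coordinate $j$ to coordinate $i$), the marginal constraints become packing inequalities $\sum_i \Pi_{ij} \leq p_j$ for each $j \in [\ell]$ and covering inequalities $\sum_j \Pi_{ij} \geq q_i$ for each $i \in [k]$, together with a budget constraint $\sum_{ij} C_{ij} \Pi_{ij} \leq B$ for a guess $B$ of \refequation{transport}. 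Since $\norm{p}_1 = \norm{q}_1 = 1$, summing the packing family gives $\sum_{ij} \Pi_{ij} \leq 1$ while summing the covering family gives $\sum_{ij} \Pi_{ij} \geq 1$, so every inequality-feasible $\Pi$ automatically satisfies the original equality marginals --- no penalty formulation is needed.

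To convert relative error to additive error, I use the fact that the product distribution $\Pi^{\star}_{ij} = q_i p_j$ is feasible with cost exactly $\ripover{C}{p}{q}$, so $\refequation{transport} \leq \ripover{C}{p}{q} \leq \infnorm{C}$. A $(1+\epsilon)$-relative approximation therefore has additive error at most $\epsilon \cdot \ripover{C}{p}{q}$; setting $\epsilon = \delta/\ripover{C}{p}{q}$ (or $\epsilon = \delta/\infnorm{C}$) produces a $\delta$-additive approximation, and $B$ itself can be fixed by binary doubling with only polylogarithmic overhead. Plugging this choice of $\epsilon$ into existing multiplicative-weights solvers for mixed packing-covering LPs --- deterministic sequential in $\apxO{N/\epsilon^2}$, randomized sequential in $\apxO{N/\epsilon}$, and parallel variants with the listed depth/work tradeoffs, where $N = k\ell$ is the input size --- directly reproduces the four claimed rates. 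The $\infnorm{C}$ versus $\ripover{C}{p}{q}$ split between items reflects whether the solver's width parameter can exploit the tighter product-distribution upper bound on $\refequation{transport}$ or only the crude uniform bound.

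The main obstacle is that the solver's output $\Pi'$ satisfies the marginals only up to $(1 \pm \bigO{\epsilon})$ multiplicative slack, whereas the theorem requires an exactly feasible transport matrix. I will apply the standard rounding of \citet{awr}: first rescale $\Pi'$ row-by-row and then column-by-column until both marginals are dominated by $q$ and $p$ respectively, then repair the residual deficit with a rank-one correction proportional to the outer product of the leftover row and column residuals. Because the total residual mass is $\bigO{\epsilon}$ in $\ell_1$, the extra cost incurred by the correction is $\bigO{\epsilon \infnorm{C}} = \bigO{\delta}$, which can be absorbed by running the solver with $\epsilon$ scaled down by a universal constant. The final matrix $X_{ij} = \Pi_{ij}/p_j$ is then an exactly feasible $\delta$-additive transportation matrix, which yields each of the four rates once the corresponding solver is substituted.
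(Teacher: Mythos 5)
There are two concrete gaps, corresponding exactly to the two refinements your single-LP approach misses.

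First, you propose one LP formulation (a mixed packing--covering program) for all four bounds, and then invoke a randomized sequential solver in $\apxO{N/\eps}$ time. No such solver exists for \emph{mixed} packing--covering LPs: the $\apxO{N/\eps}$ rate (due to \citet{ao-15-stoc}, cited in the paper as \reflemma{ao}) applies only to \emph{pure} packing (or pure covering) LPs, while the best known for mixed packing--covering is $\apxO{N/\eps^2}$ (\reflemma{deterministic-pc}). Item (2) therefore requires a genuinely different formulation. The paper drops the covering constraints entirely, instead maximizing the transported mass $\rip{\ones}{Xp}$ subject to packing constraints and a budget constraint $\sum_{ij} C_{ij} X_{ij} p_j \le \lambda$, with a binary search over $\lambda$ (\refequation{transport-packing}). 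That is a pure packing LP, and only then does \reflemma{ao} apply. Your formulation would cap item (2) at $\apxO{k\ell(\infnorm{C}/\delta)^2}$ rather than the claimed $\apxO{k\ell\,\infnorm{C}/\delta}$.

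Second, your rounding argument undermines items (1) and (3). You bound the repair cost by $\bigO{\eps\,\infnorm{C}}$ and then assert this is $\bigO{\delta}$, which forces $\eps = \Theta(\delta/\infnorm{C})$; plugging that into the $\apxO{N/\eps^2}$ solver gives running time $\apxO{k\ell(\infnorm{C}/\delta)^2}$, not the $\apxO{k\ell(\ripover{C}{q}{p}/\delta)^2}$ claimed in item (1). Since $\infnorm{C}$ can exceed $\ripover{C}{q}{p}$ by an unbounded factor, this is a real loss. The paper's fix (\reflemma{fix-apx-uniform-transport}) is to exploit that the mixed packing--covering solver returns a \emph{uniformly} approximate plan: the residual mass is bounded coordinate-wise, $p' \le 2\eps p$ and $q' \le 2\eps q$, so the oblivious repair costs $\ripover{C}{q'}{p'}/\alpha \le 4\eps\,\ripover{C}{q}{p}$ rather than $\eps\,\infnorm{C}$. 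The row-then-column rescaling plus rank-one outer-product correction of \citet{awr} that you cite does not by itself give this coordinate-wise refinement; its standard analysis only yields an $\ell_1$-type bound on the correction, which is exactly why that paper's guarantee is stated in terms of $\infnorm{C}$. You need the uniform (per-coordinate) multiplicative slack, together with the observation that a rank-one correction on residuals dominated by $\eps p$ and $\eps q$ costs at most $\bigO{\eps\,\ripover{C}{q}{p}}$, to reach items (1) and (3). (As a minor point, an exactly inequality-feasible $\Pi$ does indeed force equalities by your mass-balance argument, but the $\epspm$-approximate output does not, and the relevant slack is per-coordinate rather than aggregate --- which is precisely what makes the better rounding possible.)
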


The bounds are obtained rather simply by reducing to a variety of
relative approximation algorithms for certain types of LPs. The
reductions can be summarized briefly as follows.

A simple but important observation is that the transportation cost
from $p$ to $q$ is bounded above by $\ripover{C}{q}{p}$
(\reflemma{oblivious-transport} below). Consequently,
$\epspm$-multiplicative approximations to the value of
\refequation{transport}, for $\eps = \delta / \ripover{C}{q}{p}$, are
$\delta$-additive approximations as well.

The approximate LP solvers produce matrices $X$ that certify the
approximate value, but do not meet the constraints of
\refequation{transport} exactly. In particular, the approximations $X$
transport $\epsless$-fraction of the mass, leaving $\eps$-fraction
behind. The remaining $\eps$-fraction of probability mass is then
transported by a simple oblivious transportation scheme.

Here the algorithms diverge into two types, depending on how to model
\refequation{transport} as an LP. The first approach takes
\refequation{transport} as is, which is a ``positive LP''. Positive
LPs are a subclass of LPs where all coefficients and variables are
nonnegative. Positive LPs can be approximated faster than general LPs
can be solved.  Applied to \refequation{transport}, the approximation
algorithms for positive LPs produce what we call ``$\epsless$-uniform
transportation matrices'', which not only transport all but an
$\eps$-fraction of the total mass, but transport all but an
$\eps$-fraction of each coordinate of $p$, and fill all but an
$\eps$-fraction of each coordinate of $q$. It is shown that
$\epsless$-uniformly approximate transportation matrix can be altered
into exact transportation matrices with an additional cost of about
$\eps \ripover{C}{q}{p}$.

The second approach reformulates \refequation{transport} as a
``packing LP''. Packing LPs are a subclass of positive LPs
characterized by having only packing constraints. The advantage of
packing LPs is that they can be approximated slightly faster than the
broader class of positive LPs. However, the approximate transportation
matrices $X$ produced by the packing LP are not uniformly approximate
in the sense discussed above. Consequently, there is a larger cost of
about $\eps \infnorm{C}$ to extend $X$ to an exact transportation
matrix.

\subsection{Additional background}

There is a burgeoning literature on parametrized regimes of optimal
transport. The many parametrized settings are beyond the scope of this
note, and we refer again to \citep{awr,dgk} for further discussion.

An important special case of the optimal transport problem
\refequation{transport} is where $C$ is a metric or, more generally,
the shortest path metric of an undirected weighted graph. This setting
is equivalent to \emph{uncapacitated minimum cost flow}. Let $m$
denote the number of edges and $n$ the number of vertices of the
underlying graph. Recently, \citet{sherman} proved that a
$\epsmore$-multiplicative approximation to \refequation{transport} can
be obtained in $\apxO{m^{1+o(1)} / {\eps^2}}$ time. This translates to
a $\delta$-additive approximation in
\begin{math}
  \apxO{m^{1+o(1)} \parof{\ripover{C}{q}{p} / {\delta}}^2}
\end{math}
time. Remarkably, if the graph is sparse, then $\apxO{m^{1 + o(1)}}$
is much smaller than the explicit size of the shortest path metric,
$n^2$ -- let alone the time required to compute all pairs of shortest
paths.

There are many applications where the cost matrix $C$ is induced by
some combinatorial or geometric context and may be specified more
sparsely than as $\bigO{k \ell}$ explicit coordinates. It is well
known that some of the LP solvers used below as a black box, as well
as other similar algorithms, can often be extended to handle such
implicit matrices so long as one can provide certain simple oracles
(e.g., \citealp{ky,young,mrwz,cq-17,cq-18}).

The running time (2) of \reftheorem{apx-transport-matrix} was obtained
independently by \citet{bjks-18}, by a similar reduction to packing
LPs. \citet{bjks-18} also get the running time (2) via matrix scaling,
more in the spirit of the preceding works \citep{cuturi,awr,dgk}.

\subsection{Organization}

The rest of this note is organized as follows.
\refsection{oblivious-transport} outlines a simple and crude
approximation algorithm for \refequation{transport}, which is used to
repair approximate transportation matrices, and to upper bound
\refequation{transport}. \refsection{transport-pc} applies positive LP
solvers to approximate \refequation{transport}, and leads to the
running times in \reftheorem{apx-transport-matrix} that depend on
$\ripover{C}{q}{p}$ and not
$\infnorm{C}$. \refsection{transport-packing} applies packing LP
solvers to a reformulation of \refequation{transport}. This approach
leads to the remaining running times in
\reftheorem{apx-transport-matrix} that all depend on $\infnorm{C}$.

\section{Oblivious transport}

\labelsection{oblivious-transport}

The high-level idea is to use approximate LP solvers to transport most
of $p$ to $q$, and then transport the remaining probability mass with
a cruder approximation algorithm. The second step always uses the
following oblivious transportation scheme. The upper bound obtained
below also provides a frame of reference for comparing additive and
relative approximation factors, and is useful for bounding a binary
search for the optimal value.
\begin{lemma}
  \labellemma{oblivious-transport} For a distribution
  $q \in \distributions^k$, consider the matrix
  $X \in \nnreals^{k \times \ell}$ with each column set to $q$; i.e.,
  $X_{ij} = q_i$ for all $i,j$. For any $p \in \distributions^\ell$,
  $X$ is a transportation matrix from $p$ to $q$, with total cost
  \begin{math}
    \ripover{C}{q}{p}.
  \end{math}
\end{lemma}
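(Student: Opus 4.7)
The plan is to verify the two constraints of \refequation{transport} and then compute the cost of $X$ directly; nothing more is needed since the claimed cost $\ripover{C}{q}{p}$ is already (by definition) the expectation $\sum_{i,j} C_{ij} q_i p_j$ of $C$ under the product distribution $q \times p$.

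First I would check that $X p = q$. Since $X_{ij} = q_i$ is independent of $j$, one has $(Xp)_i = q_i \sum_{j=1}^{\ell} p_j = q_i$, using that $p \in \distributions^{\ell}$ has $\norm{p}_1 = 1$. Next I would check that $X^t \ones = \ones$: the $j$-th coordinate of $X^t \ones$ is $\sum_{i=1}^{k} X_{ij} = \sum_{i=1}^{k} q_i = 1$, since $q \in \distributions^k$. Nonnegativity of $X$ is immediate from nonnegativity of $q$.

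Having established feasibility, the cost of $X$ is
\[
  \sum_{i=1}^{k} \sum_{j=1}^{\ell} C_{ij} X_{ij} p_j = \sum_{i=1}^{k} \sum_{j=1}^{\ell} C_{ij} q_i p_j = \ripover{C}{q}{p},
\]
which is exactly the stated bound.

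There is no real obstacle here: the lemma is a sanity check that sampling the destination obliviously from $q$ (independent of the source coordinate) yields a valid transportation matrix whose cost is the average cost under $q \times p$. The only thing to keep straight is the indexing convention so that both $Xp = q$ and $X^t \ones = \ones$ land on the right sides.
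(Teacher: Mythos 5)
Your proof is correct and follows the same route as the paper: verify $Xp = q$ via $\norm{p}_1 = 1$, verify $X^t\ones = \ones$ via $\norm{q}_1 = 1$, and compute the cost directly as $\sum_{i,j} C_{ij} q_i p_j = \ripover{C}{q}{p}$. Your explicit mention of nonnegativity is a small addition but the argument is otherwise identical.
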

\begin{proof}
  Fix $p \in \distributions^\ell$.  For any $i \in [m]$,
  \begin{align*}
    \ripover{X}{e_i}{p}         %
    =                           %
    \sum_{j=1}^\ell X_{ij} p_j        %
    =                             %
    q_i \sum_{j=1}^\ell p_j          %
    \tago{=}                             %
    q_i
  \end{align*}
  since \tagr $p$ is a distribution.  For any $j \in [n]$, we have
  \begin{align*}
    \ripover{X}{\ones}{e_j}     %
    =                           %
    \sum_{i=1}^k X_{ij}         %
    =                           %
    \sum_{i=1}^k q_i            %
    \tago{=}
    1.
  \end{align*}
  since \tagr $q$ is a distribution.  Thus $X$ is a transportation
  matrix from $q$ to $p$. The transportation cost of $X$ is
  \begin{align*}
    \sum_{i=1}^k \sum_{j=1}^\ell C_{ij} X_{ij} p_j %
    =                                           %
    \sum_{i=1}^k \sum_{j=1}^\ell C_{ij} q_i p_j
    =                           %
    \ripover{C}{q}{p},
  \end{align*}
  as desired.
\end{proof}

\section{Reduction to mixed packing and covering}

\labelsection{transport-pc}

Our first family of approximation algorithms, which obtain the bounds
in \reftheorem{apx-transport-matrix} that are relative to
$\ripover{C}{q}{p}$, observe that optimal transport lies in the
following class of LPs. A \defterm{mixed packing and covering program}
is a problem of any of the forms
\begin{align*}
  \setof{\text{find } x, \text{ max } \rip{v}{p}, \text{ or min }
  \rip{v}{p}}
  \text{ over }
  x \in \nnreals^n
  \text{ s.t.\ } A x \leq b %
  \text{ and } C x \geq d, \labelthisequation{PC}{pc}
\end{align*}
where $A \in \nnreals^{m_1 \times n}$, $b \in \nnreals^{m_1}$,
$C \in \nnreals^{m_2 \times n}$, and $d \in \nnreals^{m_2}$, and
$v \in \nnreals^n$ all have nonnegative coefficients. We let $N$
denote the total number of nonzeroes in the input.  For $\eps > 0$, an
\defterm{$\eps$-relative approximation} to \refequation{pc} is either
(a) a certificate that \refequation{pc} is either infeasible, or (b) a
nonnegative vector $x \in \nnreals^n$ such that $A x \leq \epsmore b$
and $C x \geq \epsless d$ and, when there is a linear objective and
the linear program is feasible, within a $\epspm$-multiplicative
factor of the optimal value. Relative approximations to positive LPs
can be obtained with nearly-linear dependence on $N$, and polynomial
dependency on $\frac{1}{\eps}$, as follows.
\begin{lemma}[\citealp{young}]
  \labellemma{deterministic-pc}
  Given an instance of \refequation{pc} and $\eps > 0$, one can
  compute a $\eps$-relative approximation to \refequation{pc} in
  \begin{math}
    \apxO{N / \eps^2}
  \end{math}
  deterministic time.
\end{lemma}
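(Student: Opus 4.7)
The plan is to use a multiplicative weights / Lagrangian relaxation scheme, in the style of Plotkin--Shmoys--Tardos and its refinements. First I would reduce the optimization version (minimizing or maximizing $\rip{v}{x}$) to a feasibility version: add the objective as an extra constraint $\rip{v}{x} \le \tau$ (or $\ge \tau$), binary search over $\tau$ in a range bounded above by the natural oblivious estimate from \reflemma{oblivious-transport} and below by $0$, and pay only an $\apxO{1}$ overhead. This way it suffices to approximately decide feasibility of a system $Ax \le b$, $Cx \ge d$ with $x \ge 0$.

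For feasibility, I would maintain exponential weights on the constraints: a weight $y_i \propto \exp\!\bigl(\eta\,(Ax)_i/b_i\bigr)$ on each packing constraint and $z_j \propto \exp\!\bigl(-\eta\,(Cx)_j/d_j\bigr)$ on each covering constraint, with $\eta = \Theta(\log(m_1+m_2)/\eps)$. In each iteration, I would select a coordinate $k$ whose weighted covering gain dominates its weighted packing cost, i.e.
\begin{align*}
  \sum_j z_j \frac{C_{jk}}{d_j} \;\ge\; (1-O(\eps)) \sum_i y_i \frac{A_{ik}}{b_i};
\end{align*}
if no such $k$ exists, the weights give a separating hyperplane certifying infeasibility via LP duality. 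Otherwise, increase $x_k$ by a step $\Delta$ chosen to be the reciprocal of the largest coefficient $\max(A_{ik}/b_i, C_{jk}/d_j)$ among currently ``active'' rows, update the affected weights, and continue until every covering constraint is $(1-\eps)$-satisfied.

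A potential-function argument on $\Phi = \sum_i y_i + \sum_j z_j$ shows that each step decreases $\Phi$ enough (relative to the mass being added to $x$) that the algorithm halts after $\apxO{1/\eps^2}$ passes of work; the final $x$ is then $\epspm$-feasible, and by a standard scaling argument it also approximates the optimum within $1\pm\eps$.

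The main obstacle, and where Young's contribution lies, is implementing this in $\apxO{N/\eps^2}$ rather than the naive $\apxO{nN/\eps^2}$. A direct implementation recomputes the ratios $\sum_i y_i A_{ik}/b_i$ and $\sum_j z_j C_{jk}/d_j$ from scratch each iteration, costing $\Theta(N)$ per step. The fix is to bucket constraints by the rough magnitude of their weights, update only the bucket(s) that change by a $(1\pm\eps)$ factor, and use a priority queue over coordinates $k$ to locate a good increase step in amortized polylogarithmic time. Charging the non-amortized work of recomputation to the doublings of individual weights $y_i, z_j$, which by the $\eta$-exponent can happen only $\apxO{1/\eps^2}$ times per constraint, gives total work $\apxO{N/\eps^2}$ as claimed.
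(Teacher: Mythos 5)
The paper does not prove this lemma; it cites it as a known result of Young (his nearly-linear-time mixed packing/covering solver). Your sketch does reconstruct the right family of ideas --- exponential potential (soft-max over packing rows, soft-min over covering rows) with $\eta = \Theta(\log(m_1+m_2)/\eps)$, greedy coordinate selection by comparing weighted covering gain to weighted packing cost, an infeasibility certificate via the dual weights when no good coordinate exists, and the amortized ``lazy update'' trick that charges recomputation of a row's gradient contributions to the $\apxO{1/\eps^2}$ multiplicative increments of that row's weight, giving total work $\apxO{N/\eps^2}$. That is indeed the skeleton of Young's analysis, and the arithmetic behind the charging ($\eta/\eps$ increments per constraint, each touching the nonzeroes of that row once) is the right accounting.

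Two concrete issues prevent this from being a self-contained proof of the lemma as stated. First, you bound the binary search over $\tau$ by \reflemma{oblivious-transport}, but that lemma is specific to the optimal-transport instance and gives no bound for a general mixed packing/covering LP; for the general statement the search range must be derived from the problem coefficients (for instance, from the ratio of the largest to smallest nonzero entry in the normalized system), and it is the logarithm of that ratio that $\apxO{\cdot}$ absorbs. Second, the step-size rule and the assertion that each step ``decreases $\Phi$ enough'' are stated without the quantitative control that drives Young's bound: the step must be calibrated so that every affected exponent moves by at most $O(\eps)$ in one iteration, otherwise the first-order potential estimate fails and the $1/\eps^2$ iteration count does not follow. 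These are the two places where the argument, as written, leaves a genuine gap rather than merely compressing detail.
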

\begin{lemma}[\citealp{mrwz}]
  \labellemma{parallel-pc} Given an instance of a mixed packing and
  covering problem \refequation{pc} and $\eps > 0$, one can compute a
  $\eps$-relative approximation to \refequation{pc} deterministically
  in parallel in
  \begin{math}
    \apxO{1/\eps^3}
  \end{math}
  depth and
  \begin{math}
    \apxO{N/\eps^2}
  \end{math}
  total work.
\end{lemma}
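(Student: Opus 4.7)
The plan is to solve \refequation{pc} via a multiplicative weights / Lagrangian relaxation, adapted so that updates can be batched across variables and constraints rather than performed one-at-a-time. First I would attach to each packing constraint $A_i x \leq b_i$ an exponential penalty weight $u_i \propto \exp(\eta (A_i x)/b_i)$ and to each covering constraint $C_j x \geq d_j$ a penalty $v_j \propto \exp(-\eta (C_j x)/d_j)$, where $\eta = \Theta(\log(m_1+m_2)/\eps)$. The pair $(u,v)$ plays the role of a fractional dual certificate; against any fixed $(u,v)$ the Lagrangian is linear in $x$, so coordinates of $x$ compete on a single scalar ``profit'' $\rho_i = (C^t v)_i / (A^t u)_i$.

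My update rule would then raise each $x_i$ by a multiplicative factor proportional to $\rho_i$, truncated so that no coordinate grows by more than a constant within one step (width reduction). This way a single step is just a handful of sparse matrix--vector products (evaluating $A x$, $C x$, the weights, and the updates), each implementable in $\bigO{\log N}$ depth and $\bigO{N}$ work. The key structural lemma I would prove is a potential-function statement: the log of $\sum_i u_i + \sum_j v_j$ changes by $O(\eps)$ per step when the current $x$ is far from approximately satisfying the constraints, so after $\apxO{1/\eps^2}$ such steps the weights certify that $(1\pm\eps)$-feasibility has been attained. Combined with $\apxO{1/\eps}$ rescaling / restart phases used to keep widths bounded, this yields $\apxO{1/\eps^3}$ total depth and $\apxO{N/\eps^2}$ total work.

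The hard part will be the width-reduction argument that justifies parallel batched updates: in a purely sequential MWU one can safely step along a single maximally-profitable coordinate, but here many coordinates must move simultaneously without letting any single packing weight $u_i$ blow up. I would handle this by the now-standard trick of splitting coordinates into ``low-width'' (which can be updated at the full step size in parallel) and ``high-width'' (which are frozen or slowed this phase), and by choosing the step size adaptively so that $\log u_i$ and $\log v_j$ change by at most $O(\eps)$ per iteration. Finally, if the relaxed problem is infeasible, the accumulated dual weights $(u,v)$ themselves provide the infeasibility certificate, since a vanishing Lagrangian on the nonnegative orthant gives a separating hyperplane in the sense of \refequation{pc}.
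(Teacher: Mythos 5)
This lemma is a black-box citation to the parallel mixed packing-and-covering solver of Mahoney, Rao, Wang, and Zhang; the paper under review does not prove it but uses it as given, so your sketch is not an alternative to the paper's proof (there is none) but rather an outline of the cited result's argument. Judged on those terms, you capture the right general shape: exponential penalty weights on the two constraint families, a per-coordinate scalar profit obtained by collapsing the Lagrangian, multiplicative updates truncated for width control, and a log-sum-of-weights potential whose per-round drift bounds the number of parallel rounds.

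The concrete gap is in the work accounting. You posit $\apxO{1/\eps^2}$ steps per phase and $\apxO{1/\eps}$ phases, with each step performing $\bigO{N}$-work matrix--vector products; multiplying out gives $\apxO{N/\eps^3}$ total work, not the claimed $\apxO{N/\eps^2}$. Shaving that extra $1/\eps$ factor off the work while keeping the $\apxO{1/\eps^3}$ depth is precisely the technical heart of the cited result, and it cannot be asserted without argument: it requires that most of the $\apxO{1/\eps^3}$ rounds touch only a sparse, shrinking active set of coordinates (equivalently, that stale values of $Ax$ and $Cx$ are reused between rounds), so that the total number of nonzero-touches amortizes to $\apxO{N/\eps^2}$. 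Without that mechanism---and with the infeasibility certificate stated only as a one-liner---your sketch establishes at best the weaker $\apxO{N/\eps^3}$ work bound, which is what older width-independent parallel positive-LP solvers already achieve and which would not suffice for the total-work claim in \reflemma{apx-uniform-transport}.
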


\refequation{transport} is a minimization instance of mixed packing
and covering that is always feasible. The role of nonnegative
variables is played by the coordinates of the transportation matrix
$X \in \nnreals^{n \times n}$, with costs $p_j C_{ij}$ for each
$X_{ij}$. The two equations $X p = q$ and $X^t \ones = \ones$ each
give rise to two sets of packing constraints, $X p \leq q$ and
$X^t \ones \leq \ones$, and two sets of covering constraints,
$X p \geq q$ and $X^t \ones \geq \ones$. We have $N = \bigO{k \ell}$
nonzeroes, $m = 2 (k + \ell)$ packing and convering constraints, and
$n = k \ell$ variables.

An $\eps$-relative approximation to \refequation{pc} is not
necessarily a feasible solution to \refequation{transport}. To help
characterize the difference, we define the following. For fixed
$\eps > 0$ and two distributions $p \in \distributions^\ell$ and
$q \in \distributions^k$, a \defterm{$\epsless$-uniform transportation
  matrix} from $p$ to $q$ is a nonnegative matrix
$X \in \nnreals^{k \times \ell}$ with $\epsless q \leq X p \leq q$ and
$\epsless \ones \leq X^t \ones \leq \ones$.

\begin{lemma}
  \labellemma{apx-uniform-transport}
  Given an instance of the optimal transport problem
  \refequation{transport} and $\eps > 0$, a $\epsless$-uniform
  transport matrix with cost at most \refequation{transport} can be
  computed
  \begin{enumerate}
  \item sequentially in
    \begin{math}
      \apxO{\frac{kl}{\eps^2}}
    \end{math}
    time, and
  \item in parallel in
    \begin{math}
      \apxO{\frac{1}{\eps^3}}
    \end{math}
    depth and
    \begin{math}
      \apxO{\frac{kl}{\eps^2}}
    \end{math}
    total work.
  \end{enumerate}
\end{lemma}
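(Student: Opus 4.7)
The plan is to cast \refequation{transport} as the minimization instance of \refequation{pc} described in the paragraph just above the lemma statement: the variables are the nonnegative entries $X_{ij}$, the packing constraints are $X p \leq q$ and $X^t \ones \leq \ones$, the covering constraints are $X p \geq q$ and $X^t \ones \geq \ones$, and the objective is to minimize $\sum_{i,j} C_{ij} p_j X_{ij}$. All coefficients are nonnegative, and the LP is feasible (witnessed, for instance, by the oblivious transportation matrix of \reflemma{oblivious-transport}), so invoking \reflemma{deterministic-pc} (sequentially) or \reflemma{parallel-pc} (in parallel) with approximation parameter $\eps' = \eps/2$ returns a nonnegative matrix $\tilde X$ rather than an infeasibility certificate.

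By the definition of an $\eps'$-relative approximation, $\tilde X$ satisfies $\tilde X p \leq (1+\eps') q$, $\tilde X^t \ones \leq (1+\eps') \ones$, $\tilde X p \geq (1-\eps') q$, $\tilde X^t \ones \geq (1-\eps') \ones$, and $\sum_{i,j} C_{ij} p_j \tilde X_{ij} \leq (1+\eps') \refequation{transport}$. To strip the $(1+\eps')$ slack from the packing constraints and from the cost, one sets $X = \tilde X / (1+\eps')$. By nonnegativity and linearity, $X p \leq q$, $X^t \ones \leq \ones$, and $\sum_{i,j} C_{ij} p_j X_{ij} \leq \refequation{transport}$ all hold exactly. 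Meanwhile
\begin{math}
  X p \geq \frac{1-\eps'}{1+\eps'} q \geq (1 - 2\eps') q = \epsless q
\end{math}
and likewise $X^t \ones \geq \epsless \ones$, using $\eps' = \eps/2$ and the elementary inequality $(1-\eps')/(1+\eps') \geq 1 - 2\eps'$. Thus $X$ is a $\epsless$-uniform transportation matrix of cost at most $\refequation{transport}$, as required.

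The running times then follow directly from \reflemma{deterministic-pc} and \reflemma{parallel-pc} applied to an instance with $N = \bigO{k \ell}$ nonzeroes and approximation parameter $\eps' = \Theta(\eps)$. There is no serious obstacle here; the only subtlety is the rescaling step, which is needed because the black-box positive LP solvers tolerate $(1+\eps')$ slack on both the upper (packing) inequalities and the objective, whereas the definition of a $\epsless$-uniform transportation matrix and the cost bound of $\refequation{transport}$ require both to hold without any slack.
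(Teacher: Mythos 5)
Your proposal is correct and follows essentially the same approach as the paper: reduce \refequation{transport} to the mixed packing/covering LP \refequation{pc}, invoke \reflemma{deterministic-pc} or \reflemma{parallel-pc}, and rescale the returned matrix downward so the packing constraints and the cost bound hold exactly while the covering constraints degrade by an $\bigO{\eps}$ factor. The paper is slightly terser (it multiplies by $1-\eps$ and accepts a $(1-\eps)^2$-uniform matrix, implicitly rescaling $\eps$ by a constant), whereas you set $\eps' = \eps/2$ and divide by $1+\eps'$ to land exactly on $(1-\eps)$-uniform; the difference is cosmetic.
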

\begin{proof}
  By either \reflemma{deterministic-pc} or \reflemma{parallel-pc}, one
  can compute an $\eps$-approximation $X$ to \refequation{transport}
  with the claimed efficiency. Then $\epsless X$ is a
  $(1-\eps)^2$-uniform approximate transportation matrix.
\end{proof}

\begin{lemma}
  \labellemma{fix-apx-uniform-transport} Given a $\epsless$-uniform
  approximate transportation matrix $X$, one can compute a
  transportation matrix $U$ with cost at most an additive factor of
  \begin{math}
    4 \eps \ripover{C}{q}{p}
  \end{math}
  more than the cost of $X$, in linear time and work and with constant
  depth.
\end{lemma}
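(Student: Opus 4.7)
Let $r_i := q_i - (Xp)_i$ and $s_j := 1 - (X^t\ones)_j$ denote the row deficit and column slack of $X$; these satisfy $0 \le r_i \le \eps q_i$ and $0 \le s_j \le \eps$ by the $(1-\eps)$-uniform hypothesis. Summing either marginal shows their total masses agree, and I denote the common value
\begin{math}
  \eta := \textstyle\sum_i r_i = \sum_j s_j p_j \in [0, \eps].
\end{math}
If $\eta = 0$ then $X$ is already a transportation matrix and we set $U := X$.

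When $\eta > 0$, the plan is to add the rank-one product correction $Y_{ij} := r_i s_j / \eta$ and output $U := X + Y$. Non-negativity is immediate, and the marginal checks are one line each, using $\sum_i r_i/\eta = 1$ and $\sum_j s_j p_j/\eta = 1$:
\begin{align*}
  (U^t\ones)_j &= (1 - s_j) + s_j \cdot 1 = 1, &
  (Up)_i &= (q_i - r_i) + r_i \cdot 1 = q_i.
\end{align*}
Computing $r$, $s$, $\eta$, and the outer product $Y$ takes $O(k\ell)$ work at $O(1)$ depth using standard parallel reductions and pointwise multiplication, so the runtime and depth guarantees reduce entirely to justifying the additive cost bound for this choice of $U$.

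The core of the argument is bounding the extra cost $\eta^{-1}\sum_{ij} C_{ij} r_i s_j p_j$ contributed by $Y$. Applying the pointwise inequalities $r_i \le \eps q_i$ and then $s_j \le \eps$ in succession yields
\begin{math}
  \sum_{ij} C_{ij} r_i s_j p_j \le \eps^2 \ripover{C}{q}{p},
\end{math}
so the additive correction is at most $(\eps^2/\eta)\ripover{C}{q}{p}$, which already matches the target $4\eps\ripover{C}{q}{p}$ whenever $\eta \ge \eps/4$. This is where I expect the main obstacle to lie: the small-$\eta$ regime, since the naive estimate carries a $1/\eta$ factor. The cleanest fix is to also compute the oblivious transportation matrix $X^\mathrm{obl}$ of \reflemma{oblivious-transport} (cost exactly $\ripover{C}{q}{p}$) and return whichever of $X + Y$ and $X^\mathrm{obl}$ has smaller cost; the remaining work is then to argue that when $\eta < \eps/4$ the oblivious option already satisfies the additive guarantee, exploiting that a tiny residual forces $\text{cost}(X)$ to be close to $\ripover{C}{q}{p}$.
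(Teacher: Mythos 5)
Your rank-one outer-product correction $Y_{ij} = r_i s_j/\eta$ is essentially the same formula the paper uses (after expanding the paper's $Z' = Z(I - \diagonal{Y^t\ones})$), and you correctly diagnose the central obstacle: the $1/\eta$ factor is uncontrolled when the residual mass $\eta$ is small. However, the proposed fallback---compare $X+Y$ against the oblivious matrix $X^{\mathrm{obl}}$ and take the cheaper one---does not close the gap, and the stated justification (``a tiny residual forces $\mathrm{cost}(X)$ to be close to $\ripover{C}{q}{p}$'') is false. A near-exact transportation matrix $X$ can easily be far cheaper than $\ripover{C}{q}{p}$; the residual mass tells you nothing about the cost of $X$.

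Concretely, take $k=\ell=3$, $p = q = (a,a,1-2a)$ with $a=0.01$, $\eps=0.1$, and $C$ supported only on $C_{12}=C_{21}=M$. Then $\ripover{C}{q}{p} = 2Ma^2$. Construct $X$ with all residual in row $1$ and column $2$, $r = (\eta,0,0)$, $s=(0,\eta/a,0)$, and $\eta = 0.0005$ (which is $< \eps a$, hence a valid $(1-\eps)$-uniform matrix); one such $X$ has $X_{12}=X_{21}=0$ and therefore cost $0$. Your correction puts $Y_{12} = \eta/a$, giving $\mathrm{cost}(Y) = M\eta = 0.0005M$, while $4\eps\ripover{C}{q}{p} = 8\eps Ma^2 = 0.00008M$: the correction alone blows the budget. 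And $\mathrm{cost}(X^{\mathrm{obl}}) = \ripover{C}{q}{p} = 0.0002M$ also exceeds $\mathrm{cost}(X) + 4\eps\ripover{C}{q}{p} = 0.00008M$, so the min of the two fails as well.

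The missing idea is the paper's pre-scaling: replace $X$ by $Y = \bigl(1 - \tfrac{\eps}{1-\eps}\bigr)X$ \emph{before} applying the correction. This deliberately inflates the residual, and since $X$ was $(1-\eps)$-uniform, one gets $\alpha \geq \eps$, so the $1/\alpha$ factor is bounded by $1/\eps$. The price is that $Y$ is only $(1-2\eps)$-uniform, which doubles the constants in the elementwise residual bounds ($q' \leq 2\eps q$, $p' \leq 2\eps p$) and yields the $4\eps$ in the lemma; but scaling down can only decrease the cost of the carried-over part, so nothing is lost there. Without this forced lower bound on the residual mass, the $\eps^2/\eta$ factor is genuinely uncontrollable.
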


\begin{proof}
  We first scale down $X$ slightly to a $(1-2\eps)$-uniform
  transportation matrix, and then augment the shrunken transportation
  matrix with the oblivious transportation scheme from
  \reflemma{oblivious-transport}.  Clearly this can be implemented in
  linear time and work and in constant depth.

  Let $Y = \parof{1 - \frac{\eps}{1 - \eps}} X$.  Then $Y$ is a
  $(1 - 2 \eps)$-uniform approximate transportation matrix from $p$ to
  $q$. Let $p' = \parof{I - \diagonal{S^t \ones}} p$ and
  $q' = q - S p$. Since $Y$ is $(1-2\eps)$-uniform, we have
  $p' \leq 2\eps p$ and $q' \leq 2 \eps q$.  $p'$ represents the
  probability mass not yet transported by $Y$, and $q'$ represents the
  probability mass not yet filled by $Y$, and we have
  \begin{align*}
    \rip{\ones}{p'} = \rip{\ones}{q'}.
  \end{align*}
  Let $\alpha$ denote this common value. Then
  \begin{align*}
    \alpha = \rip{\ones}{q'}    %
    =
    1 - \rip{\ones}{Y x}        %
    \geq                        %
    1 - \rip{\ones}{X p} + \frac{\eps}{1-\eps} \rip{\ones}{X p} %
    \tago{\geq}                                           %
    \eps
  \end{align*}
  because \tagr $X$ being $\epsless$-uniform implies
  $1 - \eps \leq \rip{X p}{\ones} \leq 1$. Let $Z$ be the matrix where
  each column is $q' / \alpha$; by \reflemma{oblivious-transport}, $Z$
  is a transportation matrix from $p'/\alpha$ to $q' / \alpha$. Let
  $Z' = Z (I - \diagonal{Y^t \ones})$. Then
  \begin{align*}
    (Y + Z') p = Y p + Z p' = Y x + q' = q,
  \end{align*}
  and
  \begin{align*}
    (Y + Z')^t \ones            %
    &= Y^t \ones + (I - \diagonal{Y^t \ones}) Z^t \ones %
        =     %
        Y^t \ones + (I - \diagonal{Y^t \ones}) \ones
    \\
    &=                           %
      Y^t \ones + \ones - Y^t \ones = \ones,
  \end{align*}
  so $Y + Z'$ is a transportation matrix from $p$ to $q$. The cost of
  $Y$ is less than the cost of $X$, and the cost of $Z'$ is at most
  \begin{align*}
    \sum_{i,j} C_{ij} Z'_{ij} p_j %
    &\tago{=}                           %
      \sum_{i,j} C_{ij} Z_{ij} p'_j %
      \tago{=}                             %
      \frac{1}{\alpha} \sum_{i,j} C_{ij} q'_i p'_j %
    \\
    &=                                            %
      \frac{1}{\alpha} \ripover{C}{q'}{p'}         %
      \tago{\leq}                                         %
      \frac{4 \eps^2}{\alpha} \ripover{C}{q}{p} %
    \\
    &\leq                                           %
      4 \eps \ripover{C}{q}{p}
  \end{align*}
  by \tagr definition of $Z'$, \tagr definition of $Z$, \tagr
  $p' \leq 2\eps p$ and $q' \leq 2 \eps q$, and \tagr
  \begin{math}
    \alpha \geq \eps.
  \end{math}
\end{proof}

\begin{theorem}
  One can deterministically compute a $\delta$-additive approximation to
  \refequation{transport}
  \begin{enumerate}
  \item sequentially in
    \begin{math}
      \apxO{k l \prac{\ripover{C}{x}{y}}{\delta}^2}
    \end{math}
    time, and
  \item in parallel in $\apxO{\prac{\ripover{C}{x}{y}}{\delta}^3}$
    depth and $\apxO{k l \prac{\ripover{C}{x}{y}}{\delta}^2}$
    total work.
  \end{enumerate}
\end{theorem}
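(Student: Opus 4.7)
The plan is to assemble the three ingredients already in hand and let $\eps$ do the translation between relative and additive error. The key insight, already spelled out in the introduction, is that because \reflemma{oblivious-transport} bounds $\refequation{transport} \leq \ripover{C}{q}{p}$, a relative error of $\eps$ against $\ripover{C}{q}{p}$ corresponds to an additive error of $\eps \cdot \ripover{C}{q}{p}$, so setting $\eps$ proportional to $\delta / \ripover{C}{q}{p}$ will give the right additive guarantee.

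Concretely, I would set $\eps = \delta / (4\,\ripover{C}{q}{p})$ and (assuming $\eps \leq 1/2$, see below) invoke \reflemma{apx-uniform-transport} with parameter $\eps$ to obtain a $(1-\eps)$-uniform transportation matrix $X$ whose cost is at most the optimum of \refequation{transport}. Substituting the chosen $\eps$ into the $\apxO{k\ell/\eps^2}$ sequential bound from \reflemma{apx-uniform-transport} produces the desired $\apxO{k\ell\,(\ripover{C}{q}{p}/\delta)^2}$ sequential running time, and substituting into the parallel bound yields $\apxO{(\ripover{C}{q}{p}/\delta)^3}$ depth with $\apxO{k\ell\,(\ripover{C}{q}{p}/\delta)^2}$ total work.

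Next I would feed $X$ into \reflemma{fix-apx-uniform-transport} to obtain an exact transportation matrix $U$ from $p$ to $q$ whose cost exceeds that of $X$ by at most $4\eps\,\ripover{C}{q}{p} = \delta$. Since $\mathrm{cost}(X) \leq \refequation{transport}$, the matrix $U$ is a $\delta$-additive approximation. This repair runs in linear time, linear total work, and constant depth, so it is absorbed into the cost of the first step.

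I do not expect a substantive obstacle; the only subtlety is the degenerate regime $\eps > 1/2$, i.e.\ $\delta$ comparable to or larger than $\ripover{C}{q}{p}$. But in that regime the oblivious matrix of \reflemma{oblivious-transport} already has cost at most $\ripover{C}{q}{p} \leq 2\delta$, so a constant adjustment to $\eps$ (or simply returning the oblivious matrix when $\delta \geq \ripover{C}{q}{p}$) handles the edge case in linear time. The rest is bookkeeping.
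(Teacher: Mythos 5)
Your proposal is correct and matches the paper's proof essentially verbatim: same choice $\eps = \delta/(4\ripover{C}{q}{p})$, same invocation of \reflemma{apx-uniform-transport} followed by \reflemma{fix-apx-uniform-transport}, and same running-time substitution. The only addition is your explicit handling of the degenerate case $\eps > 1/2$, which the paper leaves implicit but which is a reasonable bit of hygiene.
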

\begin{proof}
  Given $\delta > 0$, let $\eps = \frac{\delta}{4
    \ripover{C}{q}{p}}$. We apply \reflemma{apx-uniform-transport} to
  generate a $\epsless$-uniform transportation matrix $X$ of cost at
  most \refequation{transport} within the desired time/depth
  bounds. We then apply \reflemma{fix-apx-uniform-transport} to $X$ to
  construct a transportation matrix $U$ with cost at most
  \begin{math}
    \refequation{transport} + 4 \eps \ripover{C}{q}{p} %
    =                                                    %
    \refequation{transport} + \delta,
  \end{math}
  as desired.
\end{proof}

\section{Reduction to packing}
\labelsection{transport-packing}

A \defterm{(pure) packing LP} is a linear program of the form
\begin{align}
  \text{maximize } \rip{c}{p}   %
  \text{ over } x \in \nnreals^n %
  \text{ over } A x \leq b, %
  \labelthisequation{P}{packing-lp}
\end{align}
where $A \in \nnreals^{m \times n}$, $b \in \nnreals^m$, and
$c \in \nnreals^{n}$.  For a fixed instance of
\refequation{packing-lp}, we let $N$ denote the number of nonzeroes in
$A$. For $\eps > 0$, a \defterm{$\epsless$-relative approximation} to
\refequation{packing-lp} is a point $x \in \nnreals^n$ such that
$A x \leq b$ and $\rip{c}{x}$ is at least $\epsless$ times the optimal
value of \refequation{packing-lp}.  $\epsless$-relative approximations
packing LPs can be obtained slightly faster than $\eps$-relative
approximations to more general positive linear programs, as follows.
\begin{lemma}[\citealp{ao-15-stoc}]
  \labellemma{ao} Given an instance of the pure packing problem
  \refequation{packing-lp}, and $\eps > 0$, a
  $\epsless$-multiplicative approximation to \refequation{packing-lp}
  can be computed in $\apxO{N / \eps}$ randomized time.
\end{lemma}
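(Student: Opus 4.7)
The plan is to combine multiplicative weights on the packing constraints with an accelerated, randomized update rule on the variables, which is the approach of Allen-Zhu and Orecchia. After rescaling rows I may assume $b = \ones$, so that the feasibility constraint reads $\infnorm{Ax} \leq 1$. A standard binary search on the optimal value $\tau$ reduces the problem to deciding, for a given $\tau$, whether there exists $x \geq 0$ with $\rip{c}{x} = \tau$ and $\infnorm{A x} \leq 1 + \bigO{\eps}$, so I focus on this normalized form and report the scaled witness as the $\epsless$-approximation.

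The first step is to set up the smoothed dual potential $\Phi_\eta(x) = \frac{1}{\eta} \log \sum_i \exp\parof{\eta (A x)_i}$, which approximates $\infnorm{A x}$ to within additive $\eta^{-1} \log m$. Its gradient is the exponential-weights distribution $y$ over constraints, and the classical MWU iteration is mirror descent against $\Phi_\eta$: maintain $y_i \propto \exp(\eta (A x)_i)$, pick the column $j$ maximizing $c_j / \rip{y}{A_{\cdot j}}$, and increment $x_j$ by an appropriate step. With $\eta \asymp \log m / \eps$, and lazy data structures to update the $y_i$ and the inner products $\rip{y}{A_{\cdot j}}$, one obtains an $\apxO{N / \eps^2}$ deterministic warm-up bound.

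To shave a factor of $\eps$, I would then layer on Nesterov-style acceleration in the mirror-descent sense, together with randomized coordinate selection. The accelerated scheme maintains two interleaved sequences of primal iterates and a momentum parameter, and at each step samples a coordinate $j$ with probability proportional to its reduced-cost contribution, touching only the $j$-th column of $A$. A careful variance analysis, exploiting the entrywise nonnegativity of $A$ together with the monotone nondecreasing nature of the iterates $x$, shows that $\apxO{1 / \eps}$ iterations suffice, for $\apxO{N / \eps}$ expected total work; a high-probability guarantee follows by a constant number of independent repetitions and returning the best feasible solution.

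The hard part is pushing the convergence rate from $\eps^{-2}$ to $\eps^{-1}$. Standard Nesterov acceleration requires globally Lipschitz gradients, which $\Phi_\eta$ does not possess, and plain stochastic coordinate methods lose a further $\eps$ factor to variance. Allen-Zhu and Orecchia's key insight is that the softmax is multiplicatively smooth along monotone nonnegative directions, and that coupling the accelerated primal update with a width-truncation rule keeps the variance of each random step proportional to the current potential progress. Making this coupling rigorous, and verifying that the sampling distribution itself can be sampled from and maintained under the lazy updates without inflating the per-iteration cost, is the technical core of the argument.
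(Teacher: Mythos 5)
This lemma is not proved in the paper: it is imported verbatim from Allen-Zhu and Orecchia (STOC~2015) and used as a black box, so there is no internal proof to compare your sketch against. With that caveat, a few remarks on the sketch as a summary of the cited result. The overall architecture you describe---smooth the $\ell_\infty$ penalty with a log-sum-exp--type regularizer, observe that plain multiplicative weights already gives $\apxO{N/\eps^2}$, then push to $\apxO{N/\eps}$ via randomized coordinate updates with an acceleration-flavored coupling---has the right shape. Two details are off, though. First, the Allen-Zhu--Orecchia packing solver does not need an outer binary search on the optimal value $\tau$: it optimizes a single smoothed objective whose approximate minimizer directly certifies a $\epsless$-approximation. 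In fact, in the paper you are reading the binary search happens one level up, inside \reflemma{apx-packing-transport}, precisely because the packing solver of this lemma is assumed to return an approximately optimal point on its own. Second, ``layer on Nesterov-style acceleration'' understates what is actually going on and where it would break: the softmax potential is not Lipschitz-smooth in the sense classical acceleration requires, and a vanilla accelerated stochastic coordinate method does not recover the $1/\eps$ rate. The cited mechanism is their linear-coupling scheme combined with a gradient-truncation step that crucially exploits nonnegativity, the box constraints, and the monotone growth of the iterates to keep the per-step variance commensurate with progress. You correctly identify this as the technical core but leave it unproved, so the proposal is an accurate high-level roadmap of the cited theorem rather than a self-contained argument for it.
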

\begin{lemma}[\citealp{mrwz}]
  \labellemma{parallel-packing} Given an instance of the pure packing
  problem \refequation{packing-lp}, and $\eps > 0$, a
  $\epsless$-multiplicative approximation to \refequation{packing-lp}
  can be computed deterministically in parallel in $\apxO{1 / \eps^2}$
  depth and $\apxO{N / \eps^2}$ total work.
\end{lemma}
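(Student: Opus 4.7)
The plan is to use a multiplicative weights update (MWU) scheme on the $m$ packing constraints of \refequation{packing-lp}, in the style of Plotkin--Shmoys--Tardos. I would maintain dual weights $y \in \nnreals^m$, initialized uniformly, and a primal candidate $x \in \nnreals^n$, initialized to $0$. In each of $T = \apxO{1/\eps^2}$ outer iterations I would do two things in parallel. First, a \emph{dual step}: update each weight as $y_i \leftarrow y_i \cdot \exp\parof{\eta (A_i x - b_i)/b_i}$ with step size $\eta = \Theta(\eps)$. Second, a \emph{primal step}: compute the ``congestion'' vector $s = A^t y$ and, for every coordinate $j$, increase $x_j$ at a rate proportional to $c_j/s_j$, clipped so that no constraint load grows by more than an $O(\eps)$ fraction in a single iteration.

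The analysis follows the standard log-sum-exp potential $\Phi = \sum_i y_i$: the dual update increases $\Phi$ by a factor of at most $1 + \eta \cdot (\text{weighted average violation})$, while $\log \Phi$ lower-bounds $\eta$ times the maximum constraint violation observed so far. Combined with a primal-step guarantee that each iteration's objective increment is at least a $(1-\eps)$ fraction of the Lagrangian bound, $T = \apxO{1/\eps^2}$ iterations should suffice to produce a running average $\bar x = \frac{1}{T}\sum_t x^{(t)}$ whose objective is within a $(1-\eps)$ factor of optimum and whose constraints are violated by at most an $O(\eps)$ multiplicative amount; a final rescaling by $1/(1+\eps)$ restores exact feasibility at the cost of another $(1-\eps)$ factor that can be absorbed by taking $\eps$ slightly smaller.

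Each iteration reduces to two matrix--vector products with $A$ together with coordinate-wise scalar updates on $x$ and $y$, which in parallel take depth $\apxO{1}$ and work $O(N)$, giving total depth $\apxO{1/\eps^2}$ and total work $\apxO{N/\eps^2}$. The main obstacle is controlling the \emph{width} of the program, i.e., the maximum single-step contribution $A_{ij} x_j / b_i$ to any constraint: if left unbounded, the iteration count blows up by a factor of this width. I would address it via standard preprocessing (dropping variables $j$ whose $A_{\cdot j}$ alone saturates a constraint at a scale incompatible with optimality) combined with the per-step clipping above, which together keep per-iteration width $O(1)$. The gap between the randomized $\apxO{N/\eps}$ bound of \reflemma{ao} and the deterministic parallel $\apxO{N/\eps^2}$ bound appears to reflect the inability to concentrate dual mass onto a single sampled violated constraint per round deterministically in parallel, which is what the randomized coordinate-based solvers exploit.
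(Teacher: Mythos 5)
The paper does not prove this lemma; it imports it as a black box from \citep{mrwz}, so there is no in-paper argument to compare against. Judged on its own terms, your reconstruction has a genuine gap, and it is precisely the one you flag at the end: width. The Plotkin--Shmoys--Tardos--style MWU scheme you describe yields an iteration count scaling with the width $\rho = \max_{i,j} A_{ij}\bar{x}_j/b_i$, and eliminating that dependence is the entire technical content of the cited result. Your two remedies do not close the gap. Clipping so that each constraint load grows by at most an $O(\eps)$ fraction per round does control how fast the potential $\Phi = \sum_i y_i$ increases, but that only bounds the harm of each step; it says nothing about how many steps are needed, because once clipped the primal increment need not realize the fraction of the Lagrangian bound that the MWU regret argument requires per round. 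The preprocessing remedy (drop columns whose scale is ``incompatible with optimality'') is circular, since deciding which scales are incompatible requires an estimate of the optimum you are trying to compute.

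The width-independent parallel packing solvers --- Luby--Nisan, Young, Allen-Zhu--Orecchia, and the cited Mahoney--Rao--Wang--Zhang --- do not reduce to PST-with-clipping. Their common mechanism is a \emph{multiplicative} increment rule on $x$: each coordinate $x_j$ is grown by a factor $(1+\gamma_j)$ per step, with $\gamma_j$ tuned so that the induced change in a log-sum-exp (softmax) potential is bounded independently of any width parameter, and a separate potential argument caps the total number of multiplicative increments any coordinate can receive at $\apxO{1/\eps^2}$. That second potential argument is what replaces the width factor in the iteration bound and is absent from your sketch; it would need to be added, and the additive-averaging step $\bar{x} = \frac{1}{T}\sum_t x^{(t)}$ replaced by the accumulated multiplicative iterate, to justify the claim. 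As a minor aside, your explanation of the $\apxO{N/\eps}$ sequential bound in \reflemma{ao} is also off: Allen-Zhu and Orecchia obtain the improved $\eps$-dependence via Nesterov-style acceleration of a smoothed objective, not by randomly sampling one violated constraint per round, and it is that acceleration --- not randomization per se --- that resists parallelization.
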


Consider the following LP reformulation of \refequation{transport},
that is parametrized by a value $\lambda$ that specifies a desired
transportation cost.
\begin{align}
  \begin{aligned}
    \text{max } %
    & \ripover{X}{\ones}{p} \text{ over } X \in \nnreals^{k \times \ell} \\
    \text{s.t.\ } %
    & \sum_{i=1}^k X_{ij} \leq 1 \text{ for all } j, \\ %
    & \sum_{j=1}^\ell X_{ij} p_j \leq q_i \text{ for all } i \in [m], \\
    & \sum_{i=1}^k \sum_{j=1}^\ell C_{ij} X_{ij} p_j \leq \lambda.
  \end{aligned}
      \labelthisequation{TP$(\lambda)$}{transport-packing}
\end{align}
The advantage of \refequation{transport-packing} compared to
\refequation{transport} is that \refequation{transport-packing} is a
packing LP, which as observed above can be solved slightly faster than
a mixed packing and covering LP. The packing problem
\refequation{transport-packing} has $m = \bigO{k + \ell}$ packing
constraints, $n = k\ell$ variables, and $N = \bigO{k \ell}$ nonzeroes.

$\epsless$-approximations to \refequation{transport-packing} are not
feasible solutions to \refequation{transport} even for
$\lambda = \refequation{transport}$. To help characterize the
difference, we define the following.  For fixed $\eps > 0$ and two
distributions $p \in \distributions^\ell$ and
$q \in \distributions^k$, a \defterm{$\epsless$-transportation matrix
  from $p$ to $q$} is a nonnegative matrix
$X \in \nnreals^{k \times \ell}$ with $X p \leq q$,
$X^t \ones \leq \ones$, and $\rip{\ones}{X p} \geq 1 - \eps$.

\begin{lemma}
  \labellemma{apx-packing-transport} Consider an instance of the
  optimal transport problem \refequation{transport}, and let
  $\eps, \delta > 0$ be fixed parameters with
  $\eps \leq \frac{\delta}{\ripover{C}{p}{q}}$. One can compute an
  $\epsless$-transportation matrix from $p$ to $q$ with cost at most
  $\refequation{transport} + \delta$
  \begin{enumerate}
  \item sequentially in
    \begin{math}
      \apxO{\frac{kl}{\eps}}
    \end{math}
    randomized time, and
  \item in parallel in
    \begin{math}
      \apxO{\frac{1}{\eps^2}}
    \end{math}
    depth and
    \begin{math}
      \apxO{\frac{kl}{\eps^2}}
    \end{math}
    total work.
  \end{enumerate}
\end{lemma}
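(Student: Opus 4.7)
The plan is to invoke the black-box packing LP solver on the parametrized formulation $\refequation{transport-packing}$ for a carefully chosen $\lambda$ close to the unknown optimum $\refequation{transport}$, using \reflemma{ao} sequentially or \reflemma{parallel-packing} in parallel. The key structural fact I would exploit is that whenever $\lambda \geq \refequation{transport}$, the LP $\refequation{transport-packing}$ has optimum exactly $1$: the column constraints $X^t \ones \leq \ones$ yield the universal upper bound $\rip{\ones}{Xp} = \sum_j p_j \sum_i X_{ij} \leq 1$, and the true optimal transport matrix $X^\star$ realizes this bound while satisfying the cost constraint. Hence for any such $\lambda$, the $\epsless$-multiplicative solver returns some $X \geq 0$ with $Xp \leq q$, $X^t \ones \leq \ones$, cost at most $\lambda$, and $\rip{\ones}{Xp} \geq 1 - \eps$, which is precisely an $\epsless$-transportation matrix of cost at most $\lambda$.

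Since $\refequation{transport}$ is not known, I would binary search for a suitable $\lambda$ in the bracket $[\lambda_L, \lambda_R] = [0, \ripover{C}{q}{p}]$; the right endpoint upper bounds $\refequation{transport}$ by \reflemma{oblivious-transport}, while at the left endpoint only $X = 0$ is feasible. I would repeatedly bisect, running the solver at the midpoint and moving $\lambda_R$ down whenever the returned objective is at least $1 - \eps$ and $\lambda_L$ up otherwise, until $\lambda_R - \lambda_L \leq \delta$. By the structural fact, whenever the solver fails the tested $\lambda$ must be strictly less than $\refequation{transport}$, so the invariant $\lambda_L < \refequation{transport}$ is preserved throughout; at termination therefore $\lambda_R \leq \lambda_L + \delta < \refequation{transport} + \delta$, and the matrix $X$ returned by the last successful call at $\lambda_R$ is the desired $\epsless$-transportation matrix of cost at most $\refequation{transport} + \delta$.

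The number of bisections is $\bigO{\log(\ripover{C}{q}{p}/\delta)}$, which by the hypothesis $\eps \leq \delta/\ripover{C}{p}{q}$ is at most $\bigO{\log(1/\eps)}$, a polylogarithmic factor absorbed into $\apxO{\cdot}$. Each call operates on an instance with $N = \bigO{k\ell}$ nonzeroes at precision $\eps$, costing $\apxO{k\ell/\eps}$ randomized sequential time via \reflemma{ao}, or $\apxO{1/\eps^2}$ depth and $\apxO{k\ell/\eps^2}$ work in parallel via \reflemma{parallel-packing}. The main subtlety I would have to be careful about is that the binary-search predicate ``the solver returned at least $1-\eps$'' is only one-sided monotone: it is forced to hold for every $\lambda \geq \refequation{transport}$, but may hold or fail for smaller $\lambda$, since the solver only guarantees $(1-\eps)$ times the LP optimum and that optimum can lie anywhere in $[0,1)$ when $\lambda < \refequation{transport}$. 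Fortunately, one-sided monotonicity is all the bracket-bisection argument uses: a failing left endpoint still certifies $\lambda_L < \refequation{transport}$ and a succeeding right endpoint still yields a valid output, regardless of the endpoint's position relative to $\refequation{transport}$.
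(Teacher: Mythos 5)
Your proposal is correct and follows essentially the same route as the paper: pose the parametrized packing LP $\refequation{transport-packing}$, observe that for $\lambda \geq \refequation{transport}$ its optimum equals $1$ (so a $\epsless$-relative solution from \reflemma{ao} or \reflemma{parallel-packing} is automatically a $\epsless$-transportation matrix of cost at most $\lambda$), and wrap the solver in a binary search over $\lambda \in [0, \ripover{C}{q}{p}]$. Your explicit verification that the packing optimum is exactly $1$ when $\lambda \geq \refequation{transport}$, and your remark that only one-sided monotonicity of the success predicate is needed, are useful elaborations of steps the paper leaves implicit.

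The one place you are slightly too quick is in dismissing the $\bigO{\log(1/\eps)}$ binary-search overhead as ``a polylogarithmic factor absorbed into $\apxO{\cdot}$.'' In this paper $\apxO{\cdot}$ hides only polylogarithms in $k$ and $\ell$, and $1/\eps$ need not be polynomially bounded in $k,\ell$, so the claim does not follow automatically. The paper closes this by noting that one may assume $\ripover{C}{p}{q}/\delta \leq \poly{k,\ell}$, since otherwise $\refequation{transport}$ can be solved exactly in $\poly{k,\ell} \leq \ripover{C}{p}{q}/\delta \leq 1/\eps$ time, which already fits in the claimed $\apxO{k\ell/\eps}$ budget; some remark of this kind is needed to formally discharge the overhead. (A separate minor point: your aside that at $\lambda = 0$ ``only $X = 0$ is feasible'' is not true in general when $C$ or $p$ have zero entries, but that observation plays no role in your argument.)
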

\begin{proof}
  For fixed $\lambda$, either $\lambda \leq \refequation{transport}$,
  or either $\epsless$-approximation algorithm from \reflemma{ao} or
  \reflemma{parallel-packing} returns $\epsless$-transportation
  matrices $X$ from $p$ to $q$ with cost at most $\lambda$.  We wrap
  the $\epsless$-relative approximation algorithms in a binary search
  for the smallest value, up to an additive factor of $\delta$, that
  produces a $\epsless$-transportation matrix from $p$ to $q$. Since
  $\lambda = \refequation{transport}$ is sufficient, such a search
  returns a value of
  \begin{math}
    \lambda \leq \refequation{transport} + \delta.
  \end{math}
  By \reflemma{oblivious-transport}, the search can be bounded to the
  range $[0, \ripover{C}{q}{p}]$. Thus the binary search needs at most
  $\bigO{\log{\frac{\ripover{C}{q}{p}}{\delta}}}$ iterations to
  identify such a value $\lambda$, for which we obtain the desired
  $\epsless$-transportation matrix. We can assume that
  $\frac{\ripover{C}{p}{q}}{\delta}$ is at most $\poly{k,\ell}$, since
  otherwise \refequation{transport} can be solved exactly in
  \begin{math}
    \poly{k,\ell} \leq \frac{\ripover{C}{p}{q}}{\delta} \leq
    \frac{1}{\eps}
  \end{math}
  time.
\end{proof}

\begin{lemma}
  \labellemma{fix-apx-transport} Let $X$ be a
  $\epsless$-transportation matrix from $p$ to $q$. In $\bigO{k \ell}$
  time, one can extend $X$ to a transportation matrix $U$ with an
  additional cost of $\eps \infnorm{C}$.
\end{lemma}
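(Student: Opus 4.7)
The plan is to complete $X$ by transporting the leftover mass via an oblivious-style scheme, analogous to \reflemma{fix-apx-uniform-transport} but simpler because now we only need to control total residual mass (not per-coordinate slack).

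First I would extract the residuals. Let $q' = q - X p$ denote the unfilled demand, and for each source $j$ let $r_j = 1 - \sum_{i} X_{ij}$ be the fraction of $p_j$ not yet shipped, so that $p'_j := r_j p_j$ is the unsent probability mass at $j$. Both are nonnegative by the $\epsless$-transportation conditions, and the definition $\rip{\ones}{Xp} \geq 1 - \eps$ gives
\begin{align*}
  \beta \;:=\; \sum_i q'_i \;=\; 1 - \rip{\ones}{Xp} \;=\; \sum_j p'_j \;\leq\; \eps.
\end{align*}

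Next I would define the corrector $Z' \in \nnreals^{k \times \ell}$ by $Z'_{ij} = r_j q'_i / \beta$ (taking $Z' = 0$ if $\beta = 0$). A direct check gives $\sum_i Z'_{ij} = r_j$ and $\sum_j Z'_{ij} p_j = (q'_i/\beta) \sum_j p'_j = q'_i$, so $U := X + Z'$ satisfies $U p = X p + q' = q$ and $U^t \ones = X^t \ones + (\ones - X^t \ones) = \ones$; hence $U$ is a valid transportation matrix from $p$ to $q$. All entries of $Z'$ are computable from $X$, $p$, $q$ in $\bigO{k\ell}$ time.

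Finally I would bound the extra cost. Using $C_{ij} \leq \infnorm{C}$ pointwise,
\begin{align*}
  \sum_{i,j} C_{ij} Z'_{ij} p_j
  \;=\; \frac{1}{\beta} \sum_{i,j} C_{ij} q'_i p'_j
  \;\leq\; \frac{\infnorm{C}}{\beta} \Bigl(\sum_i q'_i\Bigr)\Bigl(\sum_j p'_j\Bigr)
  \;=\; \infnorm{C} \, \beta
  \;\leq\; \eps \infnorm{C},
\end{align*}
which is the claimed additive overhead. The only subtle point is the degenerate case $\beta = 0$, which simply means $X$ is already an exact transportation matrix and $Z' = 0$ works; no step is a real obstacle because, unlike \reflemma{fix-apx-uniform-transport}, there is no need to shrink $X$ first or to argue a lower bound on $\beta$, since the loss is measured against $\infnorm{C}$ rather than $\ripover{C}{q'}{p'}/\beta$.
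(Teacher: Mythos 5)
Your construction is exactly the paper's: your $Z'_{ij} = r_j q'_i/\beta$ is the paper's $Y' = Y(I - \diagonal{X^t \ones})$ written entrywise (the paper's $\alpha$ is your $\beta$), and the cost bound via $C_{ij} \le \infnorm{C}$ matches the paper's $\ell_1$--$\ell_\infty$ bound $\ripover{C}{q'}{p'} \le \norm{q'}_1\norm{p'}_1\infnorm{C}$. The argument is correct and takes essentially the same route; your explicit handling of the degenerate case $\beta = 0$ is a minor addition the paper omits.
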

\begin{proof}
  We use the oblivious transportation scheme of
  \reflemma{oblivious-transport} to transport the remaining
  $\eps$-fraction of mass. It is straightforward to verify the
  additional transportation costs at most $\eps \infnorm{C}$, as
  follows.

  Let $p' = (I - \diagonal{X^t \ones}) p$ and $q' = q - X p$.  $p'$
  represents the probability mass not yet transported by $X$, and $q'$
  represents the probability mass not yet transported by $Y$. Let
  \begin{math}
    \alpha = \rip{\ones}{p'} = \rip{\ones}{q'}
  \end{math}
  denote the residual probability mass.  Since $X$ is a
  $\epsless$-transportation matrix, we have
  \begin{align*}
    \alpha = \rip{\ones}{q'} = 1 - \rip{\ones}{X p} \leq \eps.
    \numberthis
  \end{align*}
  Let $Y$ be the matrix\footnote{In fact, any transportation matrix
    from $p'$ to $q'$ will do.} where each column is set to
  $q' / \alpha$; by \reflemma{oblivious-transport}, $Y$ is a
  transportation matrix from $p' / \alpha$ to $q'/\alpha$. Let
  $Y' = Y (I - \diagonal{X^t \ones}) p$. By the same calculations as
  in the proof of \reflemma{fix-apx-uniform-transport}, $X + Y'$ is a
  transportation matrix from $q$ to $p$. The cost of $Y'$ is
  \begin{align*}
    \sum_{ij} C_{ij} Y'_{ij} p_j %
    \tago{=}
    \frac{1}{\alpha}    %
    \ripover{C'}{q'}{p'}        %
    \tago{\leq}                        %
    \frac{\norm{q'}_1 \norm{p'}_1}{\alpha} \infnorm{C'} %
    =                                                %
    \alpha \infnorm{C'} \tago{\leq} \eps \infnorm{C'}
  \end{align*}
  by \tagr the proof of \reflemma{fix-apx-uniform-transport}, \tagr
  Cauchy-Schwartz, and \tagr the above inequality \reflastequation.
\end{proof}

\begin{theorem}
  One can compute a $\delta$-additive approximation to
  \refequation{transport}
  \begin{enumerate}\raggedright
  \item sequentially in
    $\apxO{k\ell \frac{\infnorm{C}}{\delta}}$ randomized time, and
  \item in parallel with
    \begin{math}
      \apxO{ \prac{\infnorm{C}}{\delta}^2}
    \end{math}
    depth and
    \begin{math}
      \apxO{ k\ell \prac{\infnorm{C}}{\delta}^2}
    \end{math}
    total work.
  \end{enumerate}
\end{theorem}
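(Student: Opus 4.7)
The plan is to combine \reflemma{apx-packing-transport} and \reflemma{fix-apx-transport} in exactly the same pattern as the analogous theorem in \refsection{transport-pc}, splitting the additive budget $\delta$ evenly between the LP approximation error and the cost of the oblivious fix-up.

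First I would set $\eps = \frac{\delta}{2 \infnorm{C}}$. Since $\ripover{C}{p}{q} \leq \infnorm{C}$, this choice satisfies the hypothesis $\eps \leq \frac{\delta/2}{\ripover{C}{p}{q}}$ needed to invoke \reflemma{apx-packing-transport} with additive slack $\delta/2$; that lemma then produces a $\epsless$-transportation matrix $X$ from $p$ to $q$ with cost at most $\refequation{transport} + \delta/2$, within the claimed randomized sequential time or deterministic parallel depth/work.

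Next I would apply \reflemma{fix-apx-transport} to $X$, which extends it to an honest transportation matrix $U$ at the cost of an additional $\eps \infnorm{C} = \delta/2$. Adding the two sources of error yields the desired $\delta$-additive approximation. The fix-up step runs in $\bigO{k\ell}$ time with constant depth, so it is absorbed by the dominant bounds from \reflemma{apx-packing-transport}. Plugging $\eps = \delta/(2\infnorm{C})$ into those bounds gives $\apxO{k\ell / \eps} = \apxO{k\ell \infnorm{C}/\delta}$ sequentially and $\apxO{1/\eps^2} = \apxO{(\infnorm{C}/\delta)^2}$ depth with $\apxO{k\ell/\eps^2} = \apxO{k\ell (\infnorm{C}/\delta)^2}$ total work in parallel.

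There is essentially no obstacle: the two lemmas were designed precisely for this composition, and the only care needed is in the constant-factor split of $\delta$. The one sensitivity worth flagging is that the sequential bound inherits its randomization from \reflemma{ao} (through \reflemma{apx-packing-transport}), whereas the parallel bound remains deterministic via \reflemma{parallel-packing}, matching the statement of the theorem.
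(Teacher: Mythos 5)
Your proposal matches the paper's proof essentially verbatim: same choice $\eps = \delta/(2\infnorm{C})$, same even split of the additive budget between \reflemma{apx-packing-transport} and \reflemma{fix-apx-transport}, and the same plug-in to obtain the running times. The extra sentence you include verifying the hypothesis $\eps \leq \delta/(2\ripover{C}{p}{q})$ via $\ripover{C}{p}{q} \leq \infnorm{C}$ is a small but welcome bit of care that the paper leaves implicit.
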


\begin{proof}
  Let $\delta > 0$ be fixed. Let
  $\eps = \frac{\delta}{2 \infnorm{C}}$.  By
  \reflemma{apx-packing-transport}, we can compute a
  $\epsless$-transportation matrix $X$ with cost at most
  \begin{math}
    \refequation{transport} + \frac{\delta}{2}.
  \end{math}
  By \reflemma{fix-apx-transport}, we can extend $X$ to a
  transportation matrix $U$ with additional cost of at most
  \begin{math}
    \eps \infnorm{C} = \frac{\delta}{2},
  \end{math}
  for a total cost at most $\refequation{transport} + \delta$.
\end{proof}

\paragraph{Acknowledgements.} We thank Jason Altschuler and Chandra
Chekuri for insightful discussions and helpful feedback.

\begingroup
\let\stdsection\section
\def\section*#1{\stdsection{#1}}
\bibliographystyle{plainnat} %
\bibliography{transport-lp} %

\begin{thebibliography}{17}
\providecommand{\natexlab}[1]{#1}
\providecommand{\url}[1]{\texttt{#1}}
\expandafter\ifx\csname urlstyle\endcsname\relax
  \providecommand{\doi}[1]{doi: #1}\else
  \providecommand{\doi}{doi: \begingroup \urlstyle{rm}\Url}\fi

\bibitem[{Allen Zhu} and Orecchia(2015)]{ao-15-stoc}
Zeyuan {Allen Zhu} and Lorenzo Orecchia.
\newblock Nearly-linear time positive {LP} solver with faster convergence rate.
\newblock In \emph{Proceedings of the Forty-Seventh Annual {ACM} on Symposium
  on Theory of Computing, {STOC} 2015, Portland, OR, USA, June 14-17, 2015},
  pages 229--236, 2015.

\bibitem[Altschuler et~al.(2017)Altschuler, Weed, and Rigollet]{awr}
Jason Altschuler, Jonathan Weed, and Philippe Rigollet.
\newblock Near-linear time approximation algorithms for optimal transport via
  {Sinkhorn} iteration.
\newblock In \emph{Advances in Neural Information Processing Systems 30: Annual
  Conference on Neural Information Processing Systems 2017, 4-9 December 2017,
  Long Beach, CA, {USA}}, pages 1961--1971, 2017.

\bibitem[Blanchet et~al.(2018)Blanchet, Jambulapati, Kent, and
  Sidford]{bjks-18}
Jose Blanchet, Arun Jambulapati, Carson Kent, and Aaron Sidford.
\newblock Towards optimal running times for optimal transport.
\newblock \emph{CoRR}, abs/1810.07717, 2018.
\newblock URL \url{http://arxiv.org/abs/1810.07717}.

\bibitem[Chakrabarty and Khanna(2018)]{ck}
Deeparnab Chakrabarty and Sanjeev Khanna.
\newblock Better and simpler error analysis of the {Sinkhorn}-{Knopp} algorithm
  for matrix scaling.
\newblock In \emph{1st Symposium on Simplicity in Algorithms, {SOSA} 2018,
  January 7-10, 2018, New Orleans, LA, {USA}}, pages 4:1--4:11, 2018.

\bibitem[Chekuri and Quanrud(2017)]{cq-17}
Chandra Chekuri and Kent Quanrud.
\newblock Near-linear time approximation schemes for some implicit fractional
  packing problems.
\newblock In \emph{Proceedings of the Twenty-Eighth Annual {ACM-SIAM} Symposium
  on Discrete Algorithms, {SODA} 2017, Barcelona, Spain, Hotel Porta Fira,
  January 16-19}, pages 801--820, 2017.

\bibitem[Chekuri and Quanrud(2018)]{cq-18}
Chandra Chekuri and Kent Quanrud.
\newblock Randomized {MWU} for positive {LP}s.
\newblock In \emph{Proceedings of the Twenty-Ninth Annual {ACM-SIAM} Symposium
  on Discrete Algorithms, {SODA} 2018, New Orleans, LA, USA, January 7-10,
  2018}, pages 358--377, 2018.

\bibitem[Cohen et~al.(2017)Cohen, Madry, Tsipras, and Vladu]{cmtv}
Michael~B. Cohen, Aleksander Madry, Dimitris Tsipras, and Adrian Vladu.
\newblock Matrix scaling and balancing via box constrained {Newton}'s method
  and interior point methods.
\newblock In \emph{58th {IEEE} Annual Symposium on Foundations of Computer
  Science, {FOCS} 2017, Berkeley, CA, USA, October 15-17, 2017}, pages
  902--913, 2017.

\bibitem[Cuturi(2013)]{cuturi}
Marco Cuturi.
\newblock {Sinkhorn} distances: Lightspeed computation of optimal transport.
\newblock In \emph{Advances in Neural Information Processing Systems 26: 27th
  Annual Conference on Neural Information Processing Systems 2013. Proceedings
  of a meeting held December 5-8, 2013, Lake Tahoe, Nevada, United States.},
  pages 2292--2300, 2013.

\bibitem[Dvurechensky et~al.(2018)Dvurechensky, Gasnikov, and Kroshnin]{dgk}
Pavel Dvurechensky, Alexander Gasnikov, and Alexey Kroshnin.
\newblock Computational optimal transport: Complexity by accelerated gradient
  descent is better than by {Sinkhorn}'s algorithm.
\newblock In \emph{Proceedings of the 35th International Conference on Machine
  Learning, {ICML} 2018, Stockholmsm{\"{a}}ssan, Stockholm, Sweden, July 10-15,
  2018}, pages 1366--1375, 2018.

\bibitem[Koufogiannakis and Young(2014)]{ky}
Christos Koufogiannakis and Neal~E. Young.
\newblock A nearly linear-time {PTAS} for explicit fractional packing and
  covering linear programs.
\newblock \emph{Algorithmica}, 70\penalty0 (4):\penalty0 648--674, 2014.
\newblock Preliminary version in FOCS 2007.

\bibitem[Lee and Sidford(2014)]{ls}
Yin~Tat Lee and Aaron Sidford.
\newblock Path finding methods for linear programming: Solving linear programs
  in $\tilde{O}(\sqrt{rank})$ iterations and faster algorithms for maximum
  flow.
\newblock In \emph{55th {IEEE} Annual Symposium on Foundations of Computer
  Science, {FOCS} 2014, Philadelphia, PA, USA, October 18-21, 2014}, pages
  424--433, 2014.

\bibitem[Mahoney et~al.(2016)Mahoney, Rao, Wang, and Zhang]{mrwz}
Michael~W. Mahoney, Satish Rao, Di~Wang, and Peng Zhang.
\newblock Approximating the solution to mixed packing and covering {LP}s in
  parallel $\tilde{O}(\epsilon^{-3})$ time.
\newblock In \emph{43rd International Colloquium on Automata, Languages, and
  Programming, {ICALP} 2016, July 11-15, 2016, Rome, Italy}, pages 52:1--52:14,
  2016.

\bibitem[Sherman(2017)]{sherman}
Jonah Sherman.
\newblock Generalized preconditioning and undirected minimum-cost flow.
\newblock In \emph{Proceedings of the Twenty-Eighth Annual {ACM-SIAM} Symposium
  on Discrete Algorithms, {SODA} 2017, Barcelona, Spain, Hotel Porta Fira,
  January 16-19}, pages 772--780, 2017.

\bibitem[Sinkhorn and Knopp(1967)]{sk}
Richard Sinkhorn and Paul Knopp.
\newblock Concerning nonnegative matrices and doubly stochastic matrices.
\newblock \emph{Pacific Journal of Mathematics}, 21\penalty0 (2):\penalty0
  343--348, 1967.

\bibitem[Villani(2003)]{villani-03}
Cédric Villani.
\newblock \emph{Topics in Optimal Transportation}, volume~58 of \emph{Graduate
  Studies in Mathematics}.
\newblock American Mathematical Society, 2003.

\bibitem[Villani(2009)]{villani-09}
Cédric Villani.
\newblock \emph{Optimal transport: old and new}, volume 338 of
  \emph{Grundlehren der mathematischen Wissenschaften}.
\newblock Springer, Berlin, Heidelberg, 2009.

\bibitem[Young(2014)]{young}
Neal~E. Young.
\newblock Nearly linear-time approximation schemes for mixed packing/covering
  and facility-location linear programs.
\newblock \emph{CoRR}, abs/1407.3015, 2014.
\newblock URL \url{http://arxiv.org/abs/1407.3015}.

\end{thebibliography}
\endgroup

\end{document}

